\newtheorem{assumption}{Assumption}
\newtheorem{problem}{Problem}
\newtheorem{lemma}{Lemma}
\newtheorem{Theorem}{Theorem}
\title{\LARGE \bf
An Algorithm for Multi-Objective Multi-Agent Optimization
}
\author{Maude J. Blondin$^1$ and Matthew Hale$^{2}$
\thanks{*Maude J. Blondin was supported by Fonds de recherche Nature et technologies postdoctoral fellowship. Matthew Hale was supported by ONR grant N00014-19-1-2543, the AFOSR Center of Excellence in Assured Autonomy in Contested Environments, and a task order contract with AFRL Munitions Directorate at Eglin AFB.}
\thanks{$^{1}$Maude J. Blondin is a postdoctoral researcher with the Departement of Mechanical and Aerospace engineering, University of Florida, Gainesville, FL.
        {\tt\small maude.blondin@ufl.edu}}%
\thanks{$^{2}$ Matthew Hale is an assistant professor with the Departement of Mechanical and Aerospace engineering, University of Florida, Gainesville, FL.
        {\tt\small matthewhale@ufl.edu}}%
}
\begin{document}

\maketitle
\thispagestyle{empty}
\pagestyle{empty}

\begin{abstract}
Multi-agent optimization problems with many objective functions have drawn much interest over the past two decades. Many works on the subject minimize the sum of objective functions, which implicitly carries a decision about the problem formulation. Indeed, it represents a special case of a multi-objective problem, in which all objectives are prioritized equally. To the best of our knowledge, multi-objective optimization applied to multi-agent systems remains largely unexplored. Therefore, we propose a distributed algorithm that allows the exploration of Pareto optimal solutions for the non-homogeneously weighted sum of objective functions. In the problems we consider, each agent has one objective function to minimize based on a gradient method. Agents update their decision variables by exchanging information with other agents in the network. Information exchanges are weighted by each agent's individual weights that encode the extent to which they prioritize other agents' objectives. This paper provides a proof of convergence, performance bounds, and explicit limits for the results of agents' computations. Simulation results with different sizes of networks demonstrate the efficiency of the proposed approach and how the choice of weights impacts the agents' final result.  
\end{abstract}

\section{INTRODUCTION}

Multi-agent systems have gained significant interest in the past two decades \cite{Qin16}-\cite{Wan16}. Technological advances have enabled the deployment of multi-agent networks to many engineering applications from commercial to military uses \cite{Cio04}-\cite{Mca07}. In multi-agent optimization problems with many objectives, the most common approach is optimizing the sum of these functions \cite{Ned09}-\cite{Ned01}. This can be a natural approach, but computing the sum implicitly carries a decision about the problem formulation. In particular, the sum of objective functions represents a special case of a multi-objective problem, in which all objectives are prioritized equally. 

However, it is not difficult to envision cases in which there are objectives with different importance. For example, teams of robots may want to explore different regions of an area, or agents may have different priorities in trajectory planning when minimizing both energy consumption and travel time. In the centralized case, there has emerged a large body of work on multi-objective optimization to solve problems of this kind, such as the Tchebycheff method \cite{Mie99} and the Fandel method \cite{Sia04}. These and other algorithms are surveyed in \cite{Mie99}\cite{Sia04}. The goal of such algorithms is to explore a continuum of Pareto optimal solutions through different prioritizations of the objective functions of the problem. In light of these existing techniques, it is apparent that the sum of objectives is a special case of a broader class of multi-objective problems, where the solution to such problems represents an individual element of the Pareto Front. Further exploring this front can provide other solutions that are optimal in different senses, as well as provide a wider range of operating conditions for systems based on agents' needs. To the best of our knowledge, such techniques remain largely unexplored in a multi-agent context. 

In this paper, we propose a distributed algorithm that allows the exploration of the Pareto Front for multi-agent multi-objective problems. In particular, a team of $n$ agents optimizes the weighted sum of cost functions $f(x)=\sum_{i=1}^n w_if_i(x)$, where agent $i$ is tasked with minimizing $f_i$. When the proposed algorithm begins, each agent has an initial vector of priorities encoded as weights and an initial vector of decision variables. The proposed algorithm then has two update steps at each iteration: \textit{i)} agent $i$ updates its vector of priorities using those received from other agents in the network and, \textit{ii)} agent $i$ updates its decision vector with a gradient descent step and averaging the iterates of its neighbors.

Our proposed algorithm belongs to a broad class of averaging-based distributed optimization algorithms e.g. \cite{Ned09}\cite{Ned10}-\cite{Duc11}. What distinguishes our approach from the existing literature is that our algorithm allows the exploration of the Pareto Front through priorities selected by each agent independently. These priorities are used by agents to prioritize information they receive from others. Contrary to a large body of related work, this setup provides a network-level weight matrix that is not doubly-stochastic. Several works use the double-stochasticity assumption in their model and provide convergence rates and proofs of convergence using the infinite product of doubly-stochastic matrices \cite{Ned09}\cite{Ned10a}\cite{Ned10}\cite{Zha14}-\cite{Bia11}. Because this assumption does not hold here, we require a different approach that avoids it while accounting for the evolution of agents' weights over time.

In this paper, we provide a multi-agent multi-objective optimization algorithm that operates over a static, undirected graph with time-varying weights. We characterize this algorithm through convergence rates we derive. Through numerical simulations, we show that agents' initial priority vectors directly influence the final results of their computations. These priorities  are also shown to affect the algorithm's convergence rate. The major outcome of the proposed algorithm is that a multi-agent system can consider a wider range of solutions that are optimal in different senses, which paves the way for new applications with many objectives of different relative importance to different agents.

The rest of the paper is organized as follows. Section \ref{GRAPH THEORY AND MULTI-AGENT SYSTEM} presents background on graph theory and multi-agent interactions. The multi-agent optimization model and the proposed distributed optimization algorithm are provided in Section \ref{MULTI-AGENT MODEL AND ITS OPTIMIZATION}. Section \ref{Convergence of the weight vector update and the transition matrix} provides proofs of convergence of the priority-update rule, and the limit of agents' optimization updates. Section \ref{GRADIENT CONVERGENCE} presents the convergence rate for the overall multi-objective optimization algorithm. Section \ref{NUMERICAL RESULTS} presents numerical results, and Section \ref{CONCLUSIONS} concludes the paper.

\section{GRAPH THEORY AND MULTI-AGENT INTERACTIONS}
\label{GRAPH THEORY AND MULTI-AGENT SYSTEM}

In this paper, agents' interactions are represented by a connected and undirected graph $G = (V,E)$, where $V= [n]:=\{1,2, \ldots, n\}$ is the set of agents and $E  \: \subset V \times  V$ is the set of edges. An edge exists between agent $i$ and $j$, i.e., $(i,j) \in E$, if agent $i$ communicates with agent $j$. By convention, $(i,i) \notin E $ for all $i$. The adjacency matrix of the graph $G$ is represented with $H(G)$.  The degree of agent $i$ is the total number of agents that agent $i$ communicates with, denoted $deg(i)$. The degree matrix, $\Delta(G)$,  is a  diagonal $n \times n$ matrix, with $deg(i)$ on its diagonal for $i=1, \ldots, n$. The maximum vertex degree of $\Delta(G)$ is $ \Delta_{max} = \underset{i \in [n] }{max} \;\: deg(i)$.

\noindent Since $G$ is an undirected graph without self-loops, $H(G)$ is symmetric with zeros on its main diagonal. The Laplacian matrix associated with G is also symmetric, and is defined as 

\begin{equation}
 L(G) = \Delta(G) -H(G).   
\end{equation}

In this paper, we consider a static graph $G$, and, because $G$ is unambiguous, we will simply write its Laplacian as $L$.

\section{MULTI-AGENT OPTIMIZATION MODEL}
\label{MULTI-AGENT MODEL AND ITS OPTIMIZATION}

In this section, we formally define the class of problems to be solved. Then we propose a  multi-agent update law for solving them. 

\subsection{Problem Formulation}

In this paper, we consider problems in which agents minimize a prioritized sum of objective functions. Agent $i$ is responsible for minimizing the function $f_i$ about which we make the following assumption. 

\begin{assumption} For all $i \in \{1, \ldots, n\}$, the function $f_i : \mathbb{R}^n \rightarrow \mathbb{R} $ is continuously differentiable and convex.  
\label{ConvexAssump}
\end{assumption} \hfill $\triangle$

\noindent We next consider the following  optimization problem. 
\begin{problem} \text{Given convex  functions} $\{f_i\}_{i \in \{1, \ldots, n\}}$ satisfying \textit{Assumption} \ref{ConvexAssump},
\begin{equation}
\underset{x}{\text{minimize}} \sum_{i=1}^n w_if_i(x),
\end{equation}
\end{problem}

\noindent where $x$ is the vector of decision variables, $w_i$ is a priority assigned to $f_i$, $\sum_{i=1}^n w_i = 1$, and $ 0 < w_i < 1 $ for all $i$.  \hfill $\diamond$ 
\\

For a centralized problem, one could fix priorities $\{w_i\}_{i \in \{1, \ldots, n\}}$ and solve Problem 1 using a typical convex optimization method. However, in the decentralized case agents may choose different priorities. Formally, agent $i$ may choose $\{w_l^i\}_{l \in \{1, \ldots, n\}}$ while agent $j$ chooses $\{w_l^j\}_{l \in \{1, \ldots, n\}}$, with $w_l^i \neq w_l^j$ for all $l$. When this occurs, these priorities have the benefit of giving each agent the flexibility to have preferences. For example, agents generating a trajectory may wish to minimize fuel usage and travel time, and each agent can prioritize these two objectives according to their own needs. If agents select the same priorities, then they can optimize using standard techniques and reach a common solution. If their priorities are different, then agents are essentially solving different problems because they minimize different overall objective functions. As a result, reaching a common solution requires not only crafting an optimization algorithm, but also driving agents' priorities to a common value. 

Changing agents' priorities from their initial values means that no single agent's preferences are obeyed exactly. However, the net change across all agents can be done in a fair way. One such way is to drive all agent's priorities to their average value. While one could envision first computing the average priorities and then optimizing, this approach is undesirable because it requires solving two separate problems sequentially. Instead, we devise an update law that drives agents to a common solution by interlacing optimization steps with priority averaging steps. Also, this interlacing optimization allows agents to continuously modify their priorities based upon the task at hand.

\subsection{Proposed Update Law}

Each agent optimizes its assigned function using gradient descent. We choose a gradient-based algorithm due to its ease of implementation in a multi-agent context and its inherent robustness to disagreement among agents. Each agent $i$ has its own initial decision vector, denoted $x^i(0)= (x^i_1(0), \dots, x^i_m(0)) $, and its own initial vector of priorities, $w^i(0) = (w^i_1(0), \dots, w^i_n(0)) $, where $m$ and $n$ are the numbers of decision variables and agents, respectively. Thus, each agent assigns a priority to all agents, even if there is no communication between them. This provides each agent with a way to influence all final priorities, and, as will be shown below, influence the final results agents attain. \\

At iteration $k$, agent $i$ updates its priority vector $w^i$ and decision vector $x^i$ using 

\begin{equation}
    w^i(k+1) =  w^i(k) + c \sum_{j=1}^n h_j^i (w^j(k) - w^i(k))
  \label{w_i(k+1)}
\end{equation}

 \begin{equation}
x^i(k+1) = \sum_{j=1}^n a^i_j(k)x^j(k) - \alpha^i(k)d^i(k),
\label{x^i(k+1)}
\end{equation}

\noindent where $0 \:< \: c \: < \: 1/\Delta_{max}$, $h_j^i(k)$ is the j$^{th}$  i$^{th}$ entry of the $H(G)$,  $a_j^i(k)$ is the weight that agent $i$ assigns to the data provided by agent $j$ at iteration $k$, $\alpha^i$ is the step size of agent $i$, and $d^i$ is the gradient vector of agent $i$ at $x^i(k)$. Formally, $d^i(k) = \nabla f_i(x^i(k))$. We use $A(k)$ to denote the matrix used to update the decision variables. Each row of $A(k)$, $a_i(k)$ for $i=\{1, \ldots, n\}$, is the vector of weights that agent $i$ assigns to the other agents. 
As mentioned above, the literature contains a large body of work with doubly-stochastic $A(k)$ matrices. Indeed, some published rules ensure the double-stochasticity of $A(k)$, such as Metropolis-based weights \cite{Xia07} and the equal-neighbor model  \cite{Ols11}\cite{Blo05}. However, contrary to those existing works, the $A(k)$ matrix in our approach is not doubly-stochastic, but only row stochastic. This occurs because an agent can ensure that its own weights sum to 1, though different agents' weights for a particular objective need not to sum to 1. This implies that $A(k)$'s column sums need not to equal 1.

In line with the multi-objective concept, our algorithm uses the priority vectors, $w^i$ for $i=\{1, \ldots, n\}$, to quantify the importance of information received to update $x^i$ for $i=\{1, \ldots, n\}$.  Although agent $i$ assigns a priority to all other agents in $w^i$, it does not communicate with all other agents because agents' communication topology only needs to be connected. The question is then how we should compute $a^i(k)$ using $w^i(k)$, i.e., how agents should use preferences to weight information from the agents they communicate with. The proposed rule allocates to agent $i$ the priorities of agents $j$ if $(i,j) \notin E$. This allows agent $i$ to keep the relative importance given to agent $j$ for $(i,j) \in E$. The $A$ matrix is then computed as follows: 

 \begin{equation}
A(k) = Z \circ W(k) + (W(k)\circ \Tilde{H})J\circ I, 
\label{Acomputation}
\end{equation}

\noindent where $I$ is the identity matrix, $J$ is an all-ones matrix, $Z = H + I$, $\Tilde{H}=J-Z$ and $
W = \begin{bmatrix} w^1_1(k) &  w^1_2(k) & \ldots & w^1_n(k)\\
w^2_1(k) & w^2_2(k) & \ldots & w^2_n(k) \\
\vdots & \vdots & \ddots &\vdots  \\
w^n_1(k) & w^n_2(k) & \ldots & w^n_n(k)  
\end{bmatrix}. \\
$ \\ 

\noindent In \eqref{Acomputation}, $Z \circ W(k)$ computes the Hadamard product between $Z$ and $W(k)$. By doing this, the resulting matrix contains $w_j^i(k)$ for $(i,j) \in E$, $w_i^i(k)$ for all $i$, and the remaining terms are set to zero. In other words, if agent $i$ does not communicate with agent $j$, a zero is assigned to that agent. To analyse the second term of \eqref{Acomputation}, we will first explain the meaning of $\Tilde{H}$. $\Tilde{H}$ is the complement matrix of $Z$, i.e, 1 is assigned to $[\Tilde{H}]^j_i$ if $(i,j) \notin E$, and the remaining values are 0. Therefore, $(W(k)\circ \Tilde{H})J\circ I$ creates a diagonal matrix, where the diagonal terms are the sum of each row $[W(k) \circ \Tilde{H}]_i$ for $i=\{1, \ldots, n\}$. The first term summed to the second term  in \eqref{Acomputation} means that agent $i$ assigns to itself the weights $w_j^i$ if $(i,j) \notin E$ and assigns a zero value to the entries of the \textit{i}-th row and \textit{j}-th colum for $(i,j) \notin E$. The agent network topology is static, but the the weights are time-varying. The next assumption pertains to the weights of the $A$ matrix and the communication between agents.  \\

\begin{assumption} There exists a scalar  $0 < \eta_A <1 $ such that
\begin{itemize}
    \item  $a^i_i(k) \geq \eta_A$ for all $k \geq 0$ and all $i$.
    \item $a^i_j(k) \geq \eta_A $ for all $k \geq 0$ and all $(i,j) \in E$.
    \item $a^i_j(k) = 0$ for all $k$ if $(i,j) \notin E$.
    \item $A$ is row stochastic, i.e. the row sum of the $A$ matrix equals to 1.
\end{itemize}
\label{Assumption_etaA}
\end{assumption} \hfill $\triangle$

It is under \textit{Assumptions} 1 and 2 that we will perform our convergence analysis.
To simplify the forthcoming developments, the priority vector update \eqref{w_i(k+1)} is reformulated at the network level as  \cite{Olf07}

\begin{equation}
    W(k+1) =  PW(k),
  \label{ConsensusEq2}
\end{equation}

\noindent where $P = I - cL$, with $c \in (0, \: 1/\Delta_{max})$. The decision variable vector update \eqref{x^i(k+1)} is expressed as 

\begin{equation}
\begin{split}
    x^i(k+1) = & \sum_{j=1}^m [\Phi(k,s)]^j_i x^j(s) \\& - \sum_{r=s+1}^k \bigg( \sum_{j=1}^m [\Phi(k,r)]^j_i \alpha^j(r-1)d^j(r-1)\bigg) \\& - \alpha^i(k)d^i(k),
\end{split}
\label{x^i(k+1)_phiform}
\end{equation}

\noindent where the transition matrix $\Phi(k,s) = A(k)A(k-1), \ldots, A(s)$ \cite{Ned09}. We assume that $||\alpha^id^i(k)|| \leq L_1$ for all $i$, which simply means that the size of each agent's updates must be bounded. The next section analyzes the evolution of priorities and its impact upon agents' optimization updates. \\

\section{Convergence of the weight vector update and the transition matrix}
\label{Convergence of the weight vector update and the transition matrix}

In this section, we establish the convergence of the priority vector update \eqref{w_i(k+1)}. We also derive an explicit form for the weight bound $\eta_A$, as well as the limit of the transition matrix $\Phi(k,s)$. Both will be used in showing convergence of the optimization update in \eqref{x^i(k+1)_phiform}.

\subsection{Convergence of the weight vector update}

We state the following well-known Lemma, which confirms that the priority update \eqref{w_i(k+1)} does indeed compute average priorities.

\begin{lemma}$\lim_{k \to \infty} w^i(k) = \overline{w} = \sum_{j=1}^n \dfrac{w^j(0)}{n}$ for $j=1, \ldots, n$.  At the network level, $W(k)=\overline{W}$,where $\overline{W} = \boldsymbol{1} \overline{w}$ and $\boldsymbol{1}$ is an all-ones vector. 
\label{Lemma_overline{w}}
\end{lemma}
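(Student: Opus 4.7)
The plan is to exploit the fact that the consensus update $W(k+1) = PW(k)$ with $P = I - cL$ is a classical averaging iteration on an undirected connected graph, so I would reduce the lemma to showing that $P^k \to \tfrac{1}{n}\mathbf{1}\mathbf{1}^T$ and then apply this to $W(0)$.

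First I would verify that $P$ is symmetric and doubly stochastic. Symmetry follows because $L$ is symmetric on an undirected graph. For stochasticity, I note that $L\mathbf{1} = 0$ (rows of $L$ sum to zero by definition $L = \Delta(G) - H(G)$), so $P\mathbf{1} = \mathbf{1}$. The diagonal entries are $1 - c\cdot \deg(i) \geq 1 - c\Delta_{\max} > 0$ since $c < 1/\Delta_{\max}$, and the off-diagonal entries are $c\cdot h^i_j \in \{0,c\} \subset [0,1)$. By symmetry of $P$, $\mathbf{1}^T P = \mathbf{1}^T$, completing double stochasticity.

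Next I would analyze the spectrum of $P$. Since $G$ is connected and undirected, the Laplacian $L$ is positive semidefinite with a simple zero eigenvalue and corresponding eigenvector $\mathbf{1}$; its nonzero eigenvalues $\lambda_2,\dots,\lambda_n$ are strictly positive. By a Gershgorin argument applied to $L$, $\lambda_{\max}(L) \leq 2\Delta_{\max}$, so the eigenvalues of $P$ are $1 - c\lambda_i(L) \in (1 - 2c\Delta_{\max},\,1]$, which lies in $(-1,1]$ because $c < 1/\Delta_{\max}$. The eigenvalue $1$ is simple (corresponding to $\mathbf{1}$), and all other eigenvalues have magnitude strictly less than $1$. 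Consequently $P^k$ converges, and since $P$ is symmetric and doubly stochastic, the limit is the orthogonal projector onto $\mathrm{span}(\mathbf{1})$, namely $P^\infty = \tfrac{1}{n}\mathbf{1}\mathbf{1}^T$.

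Finally, iterating \eqref{ConsensusEq2} gives $W(k) = P^k W(0)$, so $\lim_{k\to\infty} W(k) = \tfrac{1}{n}\mathbf{1}\mathbf{1}^T W(0)$. Row $i$ of this limit equals $\tfrac{1}{n}\sum_{j=1}^n w^j(0) = \overline{w}$, which shows simultaneously that $w^i(k)\to \overline{w}$ for every agent $i$ and that $W(k)\to \overline{W} = \mathbf{1}\overline{w}$. The main obstacle, though not difficult, is keeping the eigenvalue bound crisp: one must argue that the choice $c < 1/\Delta_{\max}$ is exactly what pushes every eigenvalue of $P$ strictly inside $(-1,1]$ except the simple eigenvalue at $1$, so that $P^k$ actually converges rather than merely being bounded.
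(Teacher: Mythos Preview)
Your argument is correct and is precisely the standard spectral analysis of the Perron matrix $P = I - cL$ that the paper defers to by citing \cite{Olf07} and \cite{Khi07}; the paper gives no independent proof, so your proposal simply fills in what those references establish. There is no meaningful difference in approach.
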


\begin{proof}
See \cite{Olf07}\cite{Khi07}.
\end{proof}

\subsection{Lower bound on $\eta_A$}

This subsection establishes an explicit lower bound on $\eta_A$, defined in \textit{Assumption} \ref{Assumption_etaA} as a lower bound on all non-zero weights. Although \textit{Assumption} \ref{Assumption_etaA} is common, the constant $\eta_A$ is typically not known. Here we are able to leverage the priority-based update law to compute $\eta_A$. The following Lemma will be used in doing so. 

\begin{lemma}
In the priorities update \eqref{w_i(k+1)}, we have $\underset{j \in [n]}{\text{min}} \; \underset{i \in [n]}{\text{min}} \: w^j_i(k+1) \geq \underset{j \in [n]}{\text{min}} \; \underset{i \in [n]}{\text{min}} \: w^j_i(k) $ for all $i,j,k$.
\end{lemma}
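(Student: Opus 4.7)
The plan is to exploit the fact that the priority-update rule in matrix form, \(W(k+1) = PW(k)\) with \(P = I - cL\), can be shown to be a row-stochastic, entrywise-nonnegative averaging operation. Once that is established, every new priority entry is a convex combination of old priority entries, and the componentwise minimum over agents cannot decrease.

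First I would write the scalar update directly: for each agent \(i\) and each priority component \(l\),
\[
w^i_l(k+1) = w^i_l(k) + c\sum_{j=1}^n h^i_j\bigl(w^j_l(k) - w^i_l(k)\bigr) = \sum_{j=1}^n P_{ij}\, w^j_l(k),
\]
where \(P_{ij} = c\, h^i_j\) for \(j \neq i\) and \(P_{ii} = 1 - c\,\mathrm{deg}(i)\). Next I would verify the two key properties of \(P\). Nonnegativity: off-diagonal entries satisfy \(P_{ij} = c\, h^i_j \geq 0\), and for the diagonal entries the assumption \(c \in (0,\, 1/\Delta_{\max})\) gives
\[
P_{ii} = 1 - c\,\mathrm{deg}(i) \geq 1 - c\,\Delta_{\max} > 0.
\]
Row-stochasticity: since \(\sum_{j \neq i} h^i_j = \mathrm{deg}(i)\), we have \(\sum_{j=1}^n P_{ij} = (1 - c\,\mathrm{deg}(i)) + c\,\mathrm{deg}(i) = 1\).

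With these two properties in hand, the finish is immediate: because \(P_{ij} \geq 0\) and \(\sum_j P_{ij} = 1\), each \(w^i_l(k+1)\) is a convex combination of \(\{w^j_l(k)\}_{j=1}^n\), so
\[
w^i_l(k+1) = \sum_{j=1}^n P_{ij}\, w^j_l(k) \;\geq\; \min_{j \in [n]} w^j_l(k) \;\geq\; \min_{j \in [n]}\,\min_{i \in [n]} w^j_i(k).
\]
Taking the minimum over \(i\) and \(l\) on the left-hand side yields the claim.

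There is no real obstacle here; the argument is essentially the standard monotonicity property of convex-combination dynamics. The only subtlety worth flagging is that the inequality relies critically on the admissible step-size range \(c \in (0, 1/\Delta_{\max})\) assumed for the Laplacian-based update, since this is exactly what forces the diagonal of \(P\) to stay nonnegative.
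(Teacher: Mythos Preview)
Your proposal is correct and follows essentially the same approach as the paper: both exploit that $P=I-cL$ is row-stochastic with nonnegative entries, so each updated priority is a convex combination of the previous priorities and therefore cannot fall below the global minimum. The only cosmetic difference is that the paper writes $w^j_i(k)=\mu(k)+\delta^j_i(k)$ with $\delta^j_i(k)\geq 0$ and then expands the product, whereas you invoke the convex-combination lower bound directly; your explicit verification that $P_{ii}=1-c\,\mathrm{deg}(i)>0$ via the step-size condition $c\in(0,1/\Delta_{\max})$ is a useful detail the paper leaves implicit.
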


\begin{proof}
Define $ \mu(k) := \underset{j \in [n]}{\text{min}} \; \underset{i \in [n]}{\text{min}} \: w^j_i(k)$. Then, $W(k+1) = PW(k)$ can be expressed as

\begin{equation}
\begin{split}
 \begin{bmatrix} 
w_{1}^1(k+1) &\ldots &w_{1}^n(k+1) \\
\vdots & \vdots & \vdots \\
w_{i}^1(k+1) &\ldots &w_{i}^n(k+1) \\
\vdots & \vdots & \vdots \\
w_{n}^1(k+1) &\ldots &w_{n}^n(k+1) 
\end{bmatrix} 
\quad = \\
\begin{bmatrix} 
p_{1}^1 &\ldots &p_{1}^n \\
\vdots & \vdots & \vdots \\
p_{i}^1 &\ldots &p_{i}^n \\
\vdots & \vdots & \vdots \\
p_{n}^1 &\ldots &p_{n}^n 
\end{bmatrix}
\quad
    \begin{bmatrix} 
\mu(k) + \delta_{1}^1(k) &\ldots &\mu(k) + \delta_{1}^n(k) \\
\vdots & \vdots & \vdots \\
\mu(k) + \delta_{i}^1(k) &\ldots & \mu(k) + \delta_{i}^n(k) \\
\vdots & \vdots & \vdots \\
\mu(k) + \delta_{n}^1(k) &\ldots & \mu(k) + \delta_{n}^n(k)
\end{bmatrix} \\
\end{split}
\end{equation}

\noindent where $\delta_{i}^j(k) = w_{i}^j(k)-\mu(k) \geq 0$ for $i,j = \{1, \ldots, n\} $. Then, we have

\begin{equation}
\begin{split}
    w_{i}^j(k+1) = & \sum_{m=1}^n p_{i}^m [\mu(k) + \delta_{m}^j(k)] = \sum_{m=1}^n p_{i}^m \mu(k) + \sum_{m=1}^n p_{i}^m\delta_{m}^j(k) \\& = \mu(k) \sum_{m=1}^n p_{i}^m  + \sum_{m=1}^n p_{i}^m\delta_{m}^j(k).
    \end{split}
\end{equation}

\noindent By definition, we know that $\sum_{m=1}^n p_{i}^m =1$. Therefore, we get 
\begin{equation}
    w_{i}^j(k+1) = \mu(k) + \sum_{m=1}^n p_{i}^m\delta_{m}^j(k).
\end{equation}

\noindent Since $\delta_{m}^j \geq 0 $ and $p_i^m \geq 0$ for $i,j,m= \{1, \ldots,n\}$,  $w_{i}^j(k+1) \geq \mu(k) =\underset{j \in [n]}{\text{min}} \; \underset{i \in [n]}{\text{min}} \: w^j_i(k)$ for $i,j = \{1, \ldots, n \}$ and all $k$.  This establishes that the minimum of $W(k)$ is non-decreasing and other agents cannot go below the previous minimum at the next time step. 
\end{proof}
We next use the non-decreasing property of priorities to explicitly bound $\eta_A$.
\begin{Theorem}
  For $\eta_A$ in $Assumption$ \ref{Assumption_etaA}, we have $\eta_A = \underset{j \in [n]}{\text{min}} \; \underset{i \in [n]}{\text{min}} \: w^j_i(0)$.
\end{Theorem}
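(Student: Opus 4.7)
The plan is to unpack the explicit structure of the update matrix $A(k)$ given in \eqref{Acomputation} and bound each of its non-zero entries below using the monotonicity result from Lemma 2. Define $\mu(k) := \underset{j \in [n]}{\min}\underset{i \in [n]}{\min} \, w_i^j(k)$ as in Lemma 2, which immediately gives $\mu(k+1) \geq \mu(k)$, hence $\mu(k) \geq \mu(0)$ for all $k \geq 0$ by induction.

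Next, I would read off from $A(k) = Z \circ W(k) + (W(k)\circ \tilde H)J \circ I$ the two cases of non-zero entries. For an off-diagonal entry with $(i,j) \in E$, the second summand contributes nothing (its support is diagonal) and the first yields $a_j^i(k) = w_j^i(k)$. For a diagonal entry, the first summand contributes $w_i^i(k)$ (since $[Z]_i^i = 1$) and the second adds $\sum_{\ell : (i,\ell) \notin E,\, \ell \neq i} w_\ell^i(k)$. For entries with $(i,j)\notin E$ and $i\neq j$, both terms vanish, so $a_j^i(k)=0$, matching the third bullet of Assumption \ref{Assumption_etaA}.

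Now I would bound the two cases of non-zero entries below by $\mu(0)$. For the off-diagonal case, $a_j^i(k) = w_j^i(k) \geq \mu(k) \geq \mu(0)$ directly. For the diagonal case, since every term in the sum defining $a_i^i(k)$ is non-negative, we have $a_i^i(k) \geq w_i^i(k) \geq \mu(k) \geq \mu(0)$. Together these verify the first two bullets of Assumption \ref{Assumption_etaA} with the common lower bound $\mu(0) = \underset{j \in [n]}{\min}\underset{i \in [n]}{\min}\, w_i^j(0)$, which is precisely the claimed value of $\eta_A$. Row stochasticity follows from $\sum_j w_j^i(k) = 1$ applied to the decomposition of row $i$ of $A(k)$ into edge and non-edge contributions, so the fourth bullet is also satisfied.

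The main obstacle is purely a careful bookkeeping one: making sure that the Hadamard products and the $J \circ I$ projection in \eqref{Acomputation} are tracked correctly so that the diagonal case is not mistakenly bounded below only by $w_i^i(k)$ when in fact extra non-negative mass from non-neighbor priorities has been folded in. Once that decomposition is written out clearly, the use of Lemma 2 to propagate $\mu(0)$ forward in time is immediate, and no additional estimates are needed.
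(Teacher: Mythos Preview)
Your proposal is correct and follows essentially the same route as the paper: use the explicit form of $A(k)$ from \eqref{Acomputation} to see that every non-zero entry dominates some $w_j^i(k)$, and then invoke the monotonicity of $\mu(k)$ from Lemma~2 to push the bound back to time $0$. The paper compresses this into a single sentence, whereas you spell out the off-diagonal, diagonal, and zero cases (and additionally verify row stochasticity), but the underlying argument is identical.
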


\begin{proof} With Lemma 1  and with equation \eqref{Acomputation} defining $A(k)$, the smallest non-zero element of $A(k)$, denoted $\underset{i \in [n]}{min^+ \:}\underset{j \in [n]}{min^+} [A(k)]^j_i$, is at least $\underset{i \in [n]}{min \:} \underset{j \in [n]}{min} w_i^j(k)$. This directly implies that the lower bound can be set as $\eta_A = \underset{j \in [n]}{\text{min}} \; \underset{i \in [n]}{\text{min}} \: w^j_i(0)$.
\end{proof}

\subsection{Convergence behavior of $\Phi(k,s)$ and its limit $\phi$}

The next Lemma describes the convergence behavior of $\Phi(k,s)$. We state it here because we will use it below.

\begin{lemma}
Suppose \textit{Assumption} \ref{Assumption_etaA}  holds. The convergence of $\Phi(k,s)$ is geometric according to \\

\begin{equation}
    |[\Phi(k,s)]^j_i - \phi^j(s)| \leq 2 \dfrac{(1 + \eta_A^{-B_0})}{1 - \eta_A^{B_0}}(1 - \eta_A^{B_0})^{(k-s)/B_0},
\end{equation}

\noindent where $B_0 = (n-1)$ and $n$ is the number of agents.  
\end{lemma}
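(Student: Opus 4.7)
The plan is to exploit a standard coefficient-of-ergodicity argument for products of row-stochastic matrices, using Assumption~2 together with connectivity of $G$. For fixed $s$, I would define
\[
\Delta(k,s) := \max_{j \in [n]}\Big(\max_{i} [\Phi(k,s)]^j_i - \min_{i} [\Phi(k,s)]^j_i\Big),
\]
which measures how close the rows of $\Phi(k,s)$ are to a common vector. Since every $A(r)$ is row-stochastic with entries in $[0,1]$, one has $\Delta(k,s) \in [0,1]$, and the stated bound will follow once I show that $\Delta(k,s)$ decays geometrically in $k-s$ with base $(1-\eta_A^{B_0})^{1/B_0}$.

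The key combinatorial step I would prove first is the \emph{shared mass property}: for any pair of agents $i_1, i_2 \in [n]$ and any $k \geq s + B_0 - 1$, there exists a column $\ell$ with both $[\Phi(k,s)]^\ell_{i_1} \geq \eta_A^{B_0}$ and $[\Phi(k,s)]^\ell_{i_2} \geq \eta_A^{B_0}$. This is where the graph hypotheses enter: because $G$ is connected on $n$ vertices, any two vertices are joined by a walk of length at most $n-1 = B_0$; because $a^i_i(k) \geq \eta_A$ for all $k$ and $i$, one walk can ``stall'' at a vertex until the other catches up so that both walks end at the same vertex after exactly $B_0$ steps; and because every edge and diagonal weight used along these walks is at least $\eta_A$, the corresponding entry in the product $\Phi(k,s)$ is at least $\eta_A^{B_0}$.

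Given the shared-mass property together with row-stochasticity, a direct computation yields the single-window contraction $\Delta(s+B_0,s) \leq (1-\eta_A^{B_0})\Delta(s,s)$; iterating the contraction over successive windows of length $B_0$ then gives $\Delta(k,s) \leq \Delta(s,s)(1-\eta_A^{B_0})^{\lfloor (k-s)/B_0 \rfloor}$. Since the intervals $[\min_i [\Phi(k,s)]^j_i,\,\max_i [\Phi(k,s)]^j_i]$ are nested as $k$ grows (another consequence of row-stochasticity of $A(k+1)$) and shrink to zero, the limit $\phi^j(s) := \lim_{k \to \infty} [\Phi(k,s)]^j_i$ exists, is independent of $i$, and satisfies $|[\Phi(k,s)]^j_i - \phi^j(s)| \leq \Delta(k,s)$.

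The main obstacle will be the constant accounting needed to recover the stated prefactor $2(1+\eta_A^{-B_0})/(1-\eta_A^{B_0})$. Rewriting $(1-\eta_A^{B_0})^{\lfloor (k-s)/B_0 \rfloor}$ as a cleaner exponential $(1-\eta_A^{B_0})^{(k-s)/B_0}$ costs a factor of $(1-\eta_A^{B_0})^{-1}$; absorbing the residual partial window of length less than $B_0$ into a full window requires lower-bounding an incomplete product of $A(r)$'s, which produces the $\eta_A^{-B_0}$ contribution; and the crude bound $\Delta(s,s)\leq 2$ supplies the leading factor of $2$. Combining these pieces carefully is the delicate bookkeeping step that yields the claimed geometric rate.
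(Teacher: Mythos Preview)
Your sketch is correct and is precisely the Dobrushin coefficient-of-ergodicity argument that Nedi\'c and Ozdaglar carry out in the cited reference. The paper itself does not give an independent proof of this lemma: its proof consists entirely of the sentence ``See Lemma~3 and Lemma~4 in \cite{Ned09},'' so there is nothing internal to compare against. Your reconstruction of the shared-mass step (connectivity of $G$ plus the positive diagonals $a^i_i(k)\geq\eta_A$ force any two rows of a $B_0$-fold product to overlap in a column with mass at least $\eta_A^{B_0}$), the resulting single-window contraction of the row oscillation $\Delta(k,s)$ by a factor $1-\eta_A^{B_0}$, the nested-interval argument for the existence of $\phi^j(s)$, and the constant bookkeeping that produces the prefactor $2(1+\eta_A^{-B_0})/(1-\eta_A^{B_0})$ all match the structure of the cited proof. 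In short, you have supplied the argument the paper merely imports.
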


\begin{proof}
 See Lemma 3 and Lemma 4 in \cite{Ned09}.
\end{proof}

Of course, in establishing the limit of $\Phi(k,s)$, we are interested in the exact value of $\phi$. The limit of $\phi$ has been established in \cite{Ned09} for a doubly-stochastic $\Phi(k,s)$, where $\phi= \dfrac{1}{m} \cdot \boldsymbol{1}$. In contrast, our proposed algorithm does not have this doubly-stochastic property. Therefore, in this subsection, we use the structure provided by the priority update rule to establish the limit of $\phi$ for a row stochastic $\Phi(k,s)$. For purposes of analysis, we first express the limit of $A$, denoted $\overline{A}$, as
\begin{equation}
\begin{split}
&    \overline{A} = \overline{W} + C,
    \end{split}
\end{equation}

\noindent where $C = (\overline{W} \circ \Tilde{H})J\circ I - \overline{W}  \circ \Tilde{H}$. To facilitate the convergence analysis, we can express C as 

\begin{equation}
    C = Q + F,
    \label{C}
\end{equation}  
\noindent where $Q = (\overline{W} \circ \Tilde{H})J\circ I$ and $F = - \overline{W}  \circ \Tilde{H}$. Therefore, the term $Q$ is the same as the second term of \eqref{Acomputation}, i.e., $(W \circ \Tilde{H})J\circ I$ . Since we are interested in the convergence behavior of $\Phi(k,s)$, $\overline{W}$ is used in \eqref{C}. The matrix $F$ is simply the Hadamard product of $\overline{W}$ and $\Tilde{H}$. We note that the limit $\overline{A}$ exists because the priority matrix converges to $\overline{W}$. Below, we will study products containing $\overline{A}$, and the following Lemma will be used to show that certain terms vanish. 

\begin{lemma}
Let $\overline{A} = \overline{W} + C$ as above. Then, \\
\textit{a)} $\overline{W}C=0$. \\
\textit{b)} $C\overline{W} = 0$. \\
\textit{c)} $(\overline{W}+C)^{r} = \overline{W}^r + C^{r}$. \\
\textit{d)} $\lim_{k\to\infty} \overline{W}^{k} = \overline{W}$.  \\
\textit{e)} $\lim_{k\to\infty} C^k = C^{\infty}$ = 0. \\
\end{lemma}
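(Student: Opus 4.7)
The plan rests on three elementary observations: the rank-one factorization $\overline{W}=\boldsymbol{1}\,\overline{w}$; the fact that each row of $W(k)$ is a probability vector (since $P=I-cL$ preserves row sums because $L\boldsymbol{1}=0$), so that $\overline{w}\boldsymbol{1}=1$; and the symmetry $\tilde{H}^\top=\tilde{H}$ inherited from the undirected graph.

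For (a), every row of $\overline{W}$ equals $\overline{w}$, so $[\overline{W}C]_{ij}=\sum_k\overline{w}_k C_{kj}$ does not depend on $i$. Substituting $C=Q+F$ with $Q$ diagonal, $Q_{kk}=\sum_l\overline{w}_l\tilde{H}_{kl}$, and $F_{kj}=-\overline{w}_j\tilde{H}_{kj}$, the expression collapses to
\[
\overline{w}_j\sum_l\overline{w}_l\tilde{H}_{jl}-\overline{w}_j\sum_k\overline{w}_k\tilde{H}_{kj},
\]
which vanishes by symmetry of $\tilde{H}$. For (b), $[C\overline{W}]_{ij}=\overline{w}_j\sum_k C_{ik}$, and the rows of $C$ sum to zero by construction: the diagonal of $Q$ was built precisely to cancel the $i$th row sum of $F=-\overline{W}\circ\tilde{H}$.

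Part (c) follows by induction on $r$ from (a) and (b): in expanding $(\overline{W}+C)^{r+1}=(\overline{W}^r+C^r)(\overline{W}+C)$, the mixed terms $\overline{W}^r C=\overline{W}^{r-1}(\overline{W}C)$ and $C^r\overline{W}=C^{r-1}(C\overline{W})$ both vanish. Part (d) is immediate from idempotence: $\overline{W}^2=\boldsymbol{1}(\overline{w}\boldsymbol{1})\overline{w}=\boldsymbol{1}\overline{w}=\overline{W}$, so the sequence is constant for $k\ge 1$.

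The principal obstacle is (e). Combining (c) and (d) gives $\overline{A}^k=\overline{W}+C^k$, so it suffices to show $\overline{A}^k\to\overline{W}$. I would first unpack \eqref{Acomputation} with $\overline{W}$ in place of $W$ to obtain $\overline{A}_{ij}=\overline{w}_j$ when $(i,j)\in E$, $\overline{A}_{ij}=0$ when $i\ne j$ and $(i,j)\notin E$, and $\overline{A}_{ii}=\sum_{l\notin N(i)}\overline{w}_l>0$, where $N(i)$ denotes the neighborhood of $i$ in $G$. Since the graph is connected and $\overline{A}$ has strictly positive diagonal, $\overline{A}$ is a primitive row-stochastic matrix, and by Perron--Frobenius $\overline{A}^k\to\boldsymbol{1}\pi^\top$ for a unique stationary row vector $\pi$. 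The decisive computation is to verify that $\overline{w}$ is itself this stationary distribution: the contributions to $[\overline{w}\,\overline{A}]_j$ from the neighbors of $j$, $\sum_{i\in N(j)}\overline{w}_i\overline{w}_j$, combine with $\overline{w}_j\overline{A}_{jj}=\overline{w}_j\sum_{l\notin N(j)}\overline{w}_l$ to telescope to $\overline{w}_j\sum_l\overline{w}_l=\overline{w}_j$. Hence $\pi=\overline{w}$, so $\overline{A}^k\to\boldsymbol{1}\overline{w}=\overline{W}$ and therefore $C^k\to 0$.
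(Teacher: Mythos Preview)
Your argument is correct. Parts (a)--(d) track the paper's proof closely, though your systematic use of the rank-one factorization $\overline{W}=\boldsymbol{1}\,\overline{w}$ streamlines the bookkeeping; in particular, your proof of (b) via $[C\overline{W}]_{ij}=\overline{w}_j\sum_k C_{ik}$ together with the observation that the rows of $C$ sum to zero is shorter than the paper's entrywise verification, which repeats the identical-rows manipulation from (a).

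Part (e) is where you genuinely diverge. The paper splits $C=Q+F$, expands $(Q+F)^k$ into monomials in $Q$ and $F$, bounds $\rho(Q)<1$ and $\rho(F)<1$ via Gershgorin, and asserts that therefore all such monomials (and hence their sum) vanish in the limit. You instead reassemble $\overline{A}=\overline{W}+C$, note that $\overline{A}$ is a primitive row-stochastic matrix (its nonzero pattern is that of the connected graph $G$ with self-loops), apply Perron--Frobenius to get $\overline{A}^k\to\boldsymbol{1}\pi^\top$, and verify by the telescoping computation
\[
[\overline{w}\,\overline{A}]_j=\overline{w}_j\Bigl(\sum_{l\notin N(j)}\overline{w}_l+\sum_{i\in N(j)}\overline{w}_i\Bigr)=\overline{w}_j
\]
that $\pi=\overline{w}$; then (c) and (d) give $C^k=\overline{A}^k-\overline{W}\to 0$. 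Your route has the advantage of sidestepping a delicate point in the paper's argument: spectral-radius bounds on $Q$ and $F$ separately do not by themselves control the $2^k$ mixed monomials in $(Q+F)^k$, since $\rho$ is not submultiplicative and the number of terms grows exponentially. Working directly with the primitive matrix $\overline{A}$ handles this in one stroke and, as a bonus, already identifies the limit $\boldsymbol{1}\overline{w}$ that the paper establishes separately in Theorem~2.
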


\begin{proof}
\textit{a)}  We will use the diagonal property of $Q$ to establish the proof. Therefore, we will express $\overline{W}C$ as $\overline{W}C = \overline{W}(Q + F) = \overline{W}Q + \overline{W}F$. Based on analysis in Section \ref{MULTI-AGENT MODEL AND ITS OPTIMIZATION}, the matrix $Q$ is 

\begin{equation}
Q =
\begin{bmatrix} 
\sum_{k=1}^n [\overline{W}]_1^k [\Tilde{H}]_1^k & 0 &\ldots & 0 \\
0 & \sum_{k=1}^n [\overline{W}]_2^k [\Tilde{H}]_2^k &\ldots & 0 \\
\vdots & \vdots & \ddots &\vdots \\
0 & 0 & \vdots  & \sum_{k=1}^n [\overline{W}]_n^k [\Tilde{H}]_n^k
\end{bmatrix}.
\end{equation}

\noindent Therefore, the entry of the $l$-th row and the $k$-th column of the product $\overline{W}Q$ is expressed as $[\overline{W}]_l^k\sum_{m=1}^n [\overline{W}]_k^m [\Tilde{H}]_k^m $.  The entry of the $l$-th row and $k$-th column of the product $\overline{W}F$ is expressed as $-\sum_{m=1}^n[\overline{W}]_l^m[\overline{W}]_m^k[\Tilde{H}]_m^k$ Thus, we can express $[\overline{W}C]_{l}^k$ as follows:

\begin{align*}
 [\overline{W}C]_{l}^k = [\overline{W}]_l^k\sum_{m=1}^n [\overline{W}]_k^m [\Tilde{H}]_k^m - \sum_{m=1}^n[\overline{W}]_l^m[\overline{W}]_m^k[\Tilde{H}]_m^k. \\
  \end{align*}

 \noindent From Lemma \ref{Lemma_overline{w}}, we know that $\overline{W}$ has identical rows, i.e, $[\overline{W}]_{m} = [\overline{W}]_{l}$ for $m= \{1, \ldots, n \}$. Thus, the entry $[\overline{W}]_m^k = [\overline{W}]_l^k$ for $m= \{1, \ldots, n\} $. We can then replace $[\overline{W}]_m^k$ in the previous equation by $[\overline{W}]_l^k$. We obtain 
  
\begin{align*}
 [\overline{W}C]_{l}^k = [\overline{W}]_l^k\sum_{m=1}^n [\overline{W}]_k^m [\Tilde{H}]_k^m - \sum_{m=1}^n[\overline{W}]_l^m[\overline{W}]_l^k[\Tilde{H}]_m^k. \\
  \end{align*}
  
  \noindent Since $ [\overline{W}]_{l}^k$ is a scalar, we factor it out of the summation. Then, we have
  
\begin{align*}
 [\overline{W}C]_{l}^k = [\overline{W}]_l^k\sum_{m=1}^n [\overline{W}]_k^m [\Tilde{H}]_k^m - [\overline{W}]_l^k \sum_{m=1}^n[\overline{W}]_l^m[\Tilde{H}]_m^k. \\
  \end{align*} 
  
\noindent $\Tilde{H}$ is symmetric. Therefore, the previous equation can be written as 

\begin{align*}
 [\overline{W}C]_{l}^k = [\overline{W}]_l^k\sum_{m=1}^n [\overline{W}]_k^m [\Tilde{H}]_k^m - [\overline{W}]_l^k \sum_{m=1}^n[\overline{W}]_l^m[\Tilde{H}]_k^m. \\
  \end{align*} 
  
  \noindent As mentioned above, $\overline{W}$ has identical rows, which means $[\overline{W}]_k^m = [\overline{W}]_l^m$ for $m=\{1, \ldots, n\}$. Therefore,  we can substitute $[\overline{W}]_k^m$ by $[\overline{W}]_l^m$ in the previous equation and we obtain for all $l$ and $k$, 
  
\begin{align*}
 [\overline{W}C]_{l}^k = [\overline{W}]_l^k\sum_{m=1}^n [\overline{W}]_l^m [\Tilde{H}]_k^m - [\overline{W}]_l^k \sum_{m=1}^n[\overline{W}]_l^m[\Tilde{H}]_k^m = 0.\\
  \end{align*}

\noindent \textit{b)} Similarly to the previous proof, $[C\overline{W}]_{l}^k$ can be written as follows:

\begin{align*}
 [C\overline{W}]_{l}^k =  \sum_{m=1}^n [\overline{W}_{l}^m \Tilde{H}_{l}^m]\overline{W}_{l}^k - \sum_{k=1}^n [\overline{W}_{l}^m  \Tilde{H}_{l}^m ] \overline{W}_{m}^k.   \\
  \end{align*}
  
\noindent Since $\overline{W}$ has identical rows,  $\overline{W}_{l}^k = \overline{W}_{m}^k$ for  $m=\{1, \ldots, n \}$, we find that, \\

\begin{align*}
 [C\overline{W}]_{l}^k =  \sum_{m=1}^n [\overline{W}_{l}^m \Tilde{H}_{l}^m]\overline{W}_{l}^k - \sum_{k=1}^n [\overline{W}_{l}^m  \Tilde{H}_{l}^m ] \overline{W}_{l}^k  = 0. \\
  \end{align*}
  
\noindent \textit{c)}  We prove this statement by induction on $r$. \textit{Base case :}  $(\overline{W}+C)^{2} = \overline{W}^2 + \overline{W}C + C\overline{W} + C^{2} = \overline{W}^2 + C^{2}$ since $C\overline{W} = 0$ and $\overline{W}C = 0$. 
\noindent \textit{Inductive step :} Now, we assume  that $(C + \overline{W})^{r-1} = C^{r-1} + \overline{W}^{r-1}$. Then, since $C\overline{W} = 0$ and $\overline{W}C = 0$, we obtain : 

\begin{align*}
\begin{split}
(C + \overline{W})^{r} & = (C + \overline{W})(C + \overline{W})^{r-1} \\
& = (C + \overline{W})(C^{r-1} + \overline{W}^{r-1}) \\
& = C^r + C\overline{W}^{r-1} + \overline{W}C^{r-1} + \overline{W}^r \\
& = C^r + \overline{W}^r. \\
\end{split}
\end{align*}

\noindent \textit{d)} Recall that $\overline{W}$ is row stochastic with identical rows. Therefore, $\overline{W}^{k} = \overline{W}$ for $  \forall \: k \geq 1$. \\

\noindent \textit{e)} Recall that C is defined as

\begin{equation}
C = Q + F,  
\end{equation}

\noindent where $Q = (\overline{W} \circ \Tilde{H})J\circ I$  and $F = - \overline{W}  \circ \Tilde{H}$.

\noindent Then the limit $\lim_{k \rightarrow \infty} (Q + F)^{k}$ leads to infinite products of Q and F matrices. By definition, the Q matrix is a diagonal matrix with the sum of each row of F on its diagonal (see proof Lemma 4a). Recall that $\sum_{j=1}^n [\overline{W}]_i^j = 1 $ for $i = \{1, \ldots, n\}$. Since an agent communicates with at least one other agent, the product $\overline{W} \circ \Tilde{H}$ sets at least one positive element of each row equal to zero. Then the absolute row sums of F are less than 1. This means that all eigenvalues of Q are less than 1. Similarly, the absolute sum of each row of F is less than 1, and diagonal F elements are 0. Therefore, according to Gershgorin circle theorem, all eigenvalues of F are less than 1 in magnitude. Thus, $F^{\infty}$ = 0 and $Q^{\infty}$ = 0, as well as all infinite products of $Q$ and $F$. Therefore, $\lim_{k \rightarrow \infty} (Q + F)^{k} = C^k = 0$. \\
\end{proof}

We next establish the limiting behavior of $\Phi$, which we will use later to analyze the convergence of each $x^i$.
 
 \begin{Theorem}
   $\overline{\Phi(k,s)} = \lim_{k \geq s, s \to\infty} \Phi(k,s) =  \prod_{n=1}^{\infty} \overline{A} = \overline{W} = \boldsymbol{1}\overline{w}=\boldsymbol{1}\phi$.
 \end{Theorem}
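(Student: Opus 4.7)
The plan is to reduce the limiting behavior of the time-varying product $\Phi(k,s)$ to the behavior of the powers of the constant matrix $\overline{A}$, and then to compute those powers explicitly using the decomposition $\overline{A} = \overline{W} + C$ together with Lemma 4. The two preliminary facts I would invoke are: (i) Lemma 1, which gives $W(k) \to \overline{W}$, and hence via the continuous map \eqref{Acomputation} also $A(k) \to \overline{A}$ at a geometric rate; and (ii) Lemma 3, which already guarantees that $\Phi(k,s)$ converges entrywise to a limit with identical rows. So the existence of $\overline{\Phi(k,s)}$ is not in question, and my task reduces to identifying this limit.

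First I would argue that $\overline{\Phi(k,s)} = \lim_{r\to\infty} \overline{A}^r$. The idea is that once $s$ is large, every factor $A(k), A(k{-}1), \ldots, A(s)$ lies within a small neighborhood of $\overline{A}$; since the factors are row stochastic their operator norms (say, the induced $\ell_\infty$ norm) equal $1$, so a standard telescoping estimate of the form
\begin{equation*}
\bigl\lVert \Phi(k,s) - \overline{A}^{k-s+1}\bigr\rVert_\infty \;\leq\; \sum_{t=s}^{k}\bigl\lVert A(t) - \overline{A}\bigr\rVert_\infty
\end{equation*}
combined with the geometric decay of the summands shows that the two products share the same limit. Writing the infinite product as $\prod_{n=1}^{\infty}\overline{A} = \lim_{r\to\infty}\overline{A}^r$ then yields the second equality in the theorem statement.

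Next I would evaluate $\lim_{r\to\infty}\overline{A}^r$ directly via Lemma 4. Part (c) of that lemma gives $\overline{A}^r = (\overline{W}+C)^r = \overline{W}^r + C^r$, the cross terms having collapsed because of parts (a) and (b). Part (d) gives $\overline{W}^r = \overline{W}$ for every $r \geq 1$ (since $\overline{W}$ has identical rows and is row stochastic, it is idempotent), and part (e) gives $C^r \to 0$. Therefore $\overline{A}^r \to \overline{W}$, and the final equalities $\overline{W} = \boldsymbol{1}\overline{w} = \boldsymbol{1}\phi$ are simply the definition of $\overline{W}$ from Lemma 1 together with the identification of the common row $\phi$ of the limit matrix from Lemma 3.

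The main obstacle I expect is cleanly justifying the first step, i.e.\ passing from the non-commutative time-varying product $\Phi(k,s)$ to the constant powers $\overline{A}^{k-s+1}$. Since the number of factors grows with $k-s$, naive bounds can accumulate; the saving grace is that $A(k) - \overline{A}$ decays geometrically (inherited from the geometric consensus rate of $P = I - cL$ in Lemma 1) while the products of row-stochastic matrices are non-expansive in the $\ell_\infty$ sense, so the telescoping error remains summable. Once that telescoping estimate is in hand, the remaining algebraic computation via Lemma 4 is essentially immediate.
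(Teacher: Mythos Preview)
Your proposal is correct and follows the same two-step strategy as the paper: first reduce $\Phi(k,s)$ to $\overline{A}^{\,k-s+1}$ using $A(k)\to\overline{A}$, then compute $\lim_{r\to\infty}\overline{A}^{\,r}=\overline{W}$ via Lemma~4(c)--(e). The only notable difference is in the first step: the paper writes $A(\ell)=\overline{A}+\epsilon M_\ell$, expands the product as $\overline{A}^{\,k-s+1}+\epsilon X+O(\epsilon^2)$, and argues the error terms vanish because $\epsilon$ is arbitrary; your telescoping bound $\lVert\Phi(k,s)-\overline{A}^{\,k-s+1}\rVert_\infty\le\sum_{t=s}^{k}\lVert A(t)-\overline{A}\rVert_\infty$ combined with the geometric decay inherited from $P=I-cL$ is cleaner and sidesteps any worry about the implicit dependence of $X$ on the growing number of factors $k-s+1$.
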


\begin{proof}
Because $A(k) \rightarrow \overline{A}$, for all $\epsilon > 0$, there exists a $\overline{k}$ such that for all $s > \overline{k}$ and $k > s $

\begin{align*}
    \mid\mid A(k) - \overline{A} \mid\mid < \epsilon.
\end{align*}

\noindent Therefore, writing $A(\ell) = \overline{A} + \epsilon M_{\ell}$ for some matrix $M_{\ell}$, we have

\begin{align*}
\begin{split}
\Phi(k,s) &= A(k)A(k-1), \ldots, A(s) \\&  = (\overline{A} + \epsilon M_k)(\overline{A} + \epsilon M_{k-1})(\overline{A} + \epsilon M_{s})  \\
&= \overline{A}^{k-s+1} + \epsilon X + O(\epsilon^2) \cdot I
\end{split}
\end{align*}
\noindent for some matrix $X$.

\noindent Since $\epsilon$ was arbitrary, all $\epsilon$ terms can be made arbitrarily small. Therefore, we have

\begin{equation}
   \lim_{\substack{s \to\infty \\ k \geq s }} \Phi(k,s) = (\overline{W}+C)^{k-s+1} = (\overline{W}+C)^r,
\end{equation}

\noindent where $r=k-s+1$. \\

From Lemma 4c, 
\begin{align*}
\lim_{\substack{s \to\infty \\ k \geq s }} (\overline{W}+C)^{k} = \lim_{\substack{s \to\infty \\ k \geq s }} \overline{W}^k + \lim_{\substack{s \to\infty \\ k \geq s }} C^{k}.    
\end{align*}
From Lemma 4d and 4e, 
\begin{align*}
\lim_{\substack{s \to\infty \\ k \geq s }} \overline{W}^k + \lim_{\substack{s \to\infty \\ k \geq s }} C^{k} = \overline{W}.
\end{align*}
Thus, $ \overline{\Phi(k,s)} = \lim_{\substack{s \to\infty \\ k \geq s}} \Phi(k,s)=\overline{W}=\boldsymbol{1}\phi$, where $\phi= \sum_{j=1}^n \dfrac{w^j(0)}{n}$ for $j=1, \ldots, n$.

\end{proof}

\subsection{Bound for $\phi$}

This subsection provides an explicit bound for $\phi$. As with the previous Lemmas and Theorems, the following theorem establishes a bound for $\phi$.

\begin{Theorem} The convergence bound $\Phi(k,s)$ is $\underset{j \in [n]}{min \phi^j} \geq \eta_A$ for $k \geq s$ and $ s \rightarrow \infty$.
\end{Theorem}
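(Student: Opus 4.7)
The plan is to combine Theorem 1 (the explicit form of $\eta_A$), Lemma 1 (the closed form for the averaged priority vector), and Theorem 2 (which identifies $\phi$ with that average) to reduce the claim to a one-line averaging argument. Specifically, I will use the fact that every entry of $\phi$ is an average of initial priority values, each of which is bounded below by $\eta_A$ by construction.

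First, I would unpack the definitions. By Theorem 2, $\overline{\Phi(k,s)} = \boldsymbol{1}\phi$ with $\phi = \overline{w} = \frac{1}{n}\sum_{j=1}^n w^j(0)$. Reading this component-wise, the $l$-th entry is
\begin{equation}
\phi^l \;=\; \frac{1}{n}\sum_{j=1}^n w_l^j(0).
\end{equation}
Next, I would invoke Theorem 1, which gives $\eta_A = \min_{j \in [n]}\min_{i \in [n]} w_i^j(0)$. This yields the pointwise bound $w_l^j(0) \geq \eta_A$ for every pair $(j,l)$.

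The final step is a direct estimate: for every $l \in [n]$,
\begin{equation}
\phi^l \;=\; \frac{1}{n}\sum_{j=1}^n w_l^j(0) \;\geq\; \frac{1}{n}\sum_{j=1}^n \eta_A \;=\; \eta_A,
\end{equation}
so taking the minimum over $l$ preserves the inequality and gives $\min_{j \in [n]} \phi^j \geq \eta_A$, which is exactly the claim.

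There is no real obstacle here, since once Theorems 1 and 2 are in hand the statement reduces to the elementary fact that the arithmetic mean of a collection of numbers is at least as large as their common lower bound. The only thing to be a bit careful about is matching the indexing conventions of $\phi$ with those of $\overline{w}$ and of the initial priority matrix $W(0)$ so that the average in Theorem 2 is identified with the correct columns of $W(0)$ used in the definition of $\eta_A$ in Theorem 1; after that identification the inequality is immediate.
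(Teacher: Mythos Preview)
Your proposal is correct and matches the paper's own proof essentially line for line: the paper also recalls $\phi=\overline{w}$ from Theorem~2 and $\eta_A=\min_{j,i} w_i^j(0)$ from Theorem~1, then concludes $\min_j \phi^j \geq \eta_A$ by the same averaging observation you spell out explicitly. If anything, your version is slightly more detailed in writing out the componentwise inequality, but the logical content is identical.
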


\begin{proof}
Recall that $\overline{w} = \sum_{j=1}^n \dfrac{w^j(0)}{n}$ for $i= \{1, \ldots, n\}$, $\eta_A = \underset{j \in [n]}{\text{min}} \; \underset{i \in [n]}{\text{min}} \: w^j_i(0)$, and $\phi = \overline{w}$.  Thus $\underset{j \in [n]}{min \phi^j} \geq \eta_A$ for $k \geq s$ and $ s \rightarrow \infty$. 
\end{proof}

\section{GRADIENT CONVERGENCE}
\label{GRADIENT CONVERGENCE}

This section studies the convergence of the agents' information update with \eqref{x^i(k+1)_phiform} and provides a performance bound for the algorithm. Our study draws from the work of Nedic and Ozdaglar \cite{Ned09}\cite{Ned10a}, who obtained a bound of performance for a related algorithm. Their analysis is based on the assumption that the transition matrix $\Phi(k,s)$ is doubly stochastic and  finds the limit $\phi= \dfrac{1}{m} \cdot \boldsymbol{1}$. In contrast, our proposed algorithm considers non-doubly stochastic matrices, which requires new analysis. 

This section contains three parts in order to establish the performance bound \cite{Ned10a}: \textit{i)} obtaining the disagreement estimate $|| x^i(k)-y(k)||$ relating to an auxiliary sequence $\{y(k)\}$, \textit{ii)} computing the estimate of the objective function $f(\hat{y}(k))$, \textit{iii)} establishing the performance bound, i.e. an estimate of $f(\hat{x}^i(k))$ as a function of the iteration number $k$ by using the results of the two previous parts, i.e., $||x^i(k)-y(k)||$ and $f(\hat{y}(k))$.

The following Lemma presents the performance bound for the algorithm, i.e., a performance bound with $\hat{x}^i(k)$ as

\begin{equation}
    \hat{x\:}^i(k)=\dfrac{1}{k}\sum_{h=1}^k x^i(h),
    \label{x_hat}
\end{equation}

\noindent where $x^i(h)$ is defined by \eqref{x^i(k+1)}. 

\begin{lemma} Let Assumptions 1 and 2 hold and let the set $X^*$ of optimal solutions of the problem be nonempty. Suppose the gradient is bounded as previously defined. Then, we have for all $i$ and $k\geq1$,
    \begin{equation}
\begin{split}
    f(\hat{x\:}^i(k)) \leq & f(x^*) + \dfrac{m^2 L \Omega}{ k\beta(1-\beta^{1/B_0}) }C_1   + \dfrac{\alpha L^2C_2 }{2 \underset{1 \leq j \leq n}{\text{min}}(\phi^j) }   \\& + \dfrac{(dist(y(0),X^*+\alpha mL)^2}{ 2k \alpha \underset{1 \leq j \leq n}{\text{min}}(\phi^j) } + \dfrac{2 \alpha m L^2}{k}
    \end{split}
\end{equation}

\noindent where $C_1 = \bigg[1 + 2\underset{1\leq j \leq m}{max}||x^j(0)|| \bigg[ \dfrac{2}{\underset{1 \leq j \leq n}{ \: \text{min}}(\phi^j)} + 1 \bigg] \bigg], C_2 = 8 m \bigg(1 + \dfrac{m \Omega}{\beta(1-\beta^{1/B_0})} \bigg) +  || \phi||^2m, \Omega= 1 + \eta_A^{-B_0}$, and $\beta=1 - \eta_A^{B_0}$.
\end{lemma}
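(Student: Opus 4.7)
The plan is to follow the three-step program announced in the preamble of Section~\ref{GRADIENT CONVERGENCE}, adapting the Nedi\'c--Ozdaglar analysis to the row-stochastic setting by using $\phi$ in place of the uniform distribution $\tfrac{1}{n}\boldsymbol{1}$. First, I would introduce the auxiliary scalar sequence $\{y(k)\}$ defined by
\begin{equation*}
y(k+1) = y(k) - \sum_{j=1}^n \phi^j \alpha^j(k) d^j(k),
\end{equation*}
initialized at $y(0) = \sum_{j=1}^n \phi^j x^j(0)$. Unrolling \eqref{x^i(k+1)_phiform} and the corresponding unrolling of $y(k)$ and subtracting, the disagreement $x^i(k) - y(k)$ becomes an explicit sum of terms of the form $([\Phi(k,s)]^j_i - \phi^j)\,x^j(s)$ and $([\Phi(k,r)]^j_i - \phi^j)\,\alpha^j(r-1)\,d^j(r-1)$. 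Each such term is geometrically bounded by Lemma~3, and summing the tail geometric series produces the factor $\Omega/\bigl(\beta(1-\beta^{1/B_0})\bigr)$ that pervades the statement. Combined with $\|\alpha^j d^j(k)\| \leq L_1$ and $\max_j \|x^j(0)\|$, this yields a per-agent disagreement estimate whose structure mirrors the $C_1$ constant.

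Second, I would analyze $\{y(k)\}$ with the Lyapunov function $V(k) = \|y(k)-x^*\|^2$. Expanding $V(k+1)$ and applying convexity of each $f_j$ gives
\begin{equation*}
V(k+1) \leq V(k) - 2\alpha \sum_{j=1}^n \phi^j \bigl(f_j(x^j(k)) - f_j(x^*)\bigr) + \alpha^2 \left\|\sum_{j=1}^n \phi^j d^j(k)\right\|^2.
\end{equation*}
Splitting $f_j(x^j(k)) = f_j(y(k)) + \bigl(f_j(x^j(k)) - f_j(y(k))\bigr)$ and bounding the second piece by $L\|x^j(k)-y(k)\|$, the cross term rearranges to $\sum_j \phi^j f_j(y(k)) - \sum_j \phi^j f_j(x^*) = f(y(k)) - f(x^*)$, using $\phi = \overline{w}$, plus a penalty controlled by Step~1. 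Telescoping over $h = 1,\ldots,k$, dividing by $2k\alpha \min_j \phi^j$, and applying Jensen's inequality to $\hat{y}(k) = \tfrac{1}{k}\sum_{h=1}^k y(h)$ yields the $\tfrac{(\mathrm{dist}(y(0),X^*)+\alpha mL)^2}{2k\alpha \min_j \phi^j}$ and $\tfrac{\alpha L^2 C_2}{2\min_j \phi^j}$ contributions.

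Third, I would bridge from $\hat{y}(k)$ back to $\hat{x}^i(k)$. Convexity of $f$ gives $f(\hat{x}^i(k)) \leq \tfrac{1}{k}\sum_h f(x^i(h))$, and the Lipschitz bound on $f$ gives $f(x^i(h)) \leq f(y(h)) + mL\|x^i(h)-y(h)\|$. Averaging and substituting the Step~1 disagreement estimate produces the $\tfrac{m^2 L\Omega}{k\beta(1-\beta^{1/B_0})}C_1$ contribution, while the residual trailing $-\alpha^i(k)d^i(k)$ term in \eqref{x^i(k+1)_phiform} that is not absorbed into $\Phi(k,s)$ accounts for the $\tfrac{2\alpha mL^2}{k}$ summand.

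The main obstacle will be the careful bookkeeping required so that the pre-factors $m$, $\|\phi\|^2$, and $\Omega/\bigl(\beta(1-\beta^{1/B_0})\bigr)$ collapse cleanly into the stated $C_1$ and $C_2$; Lemma~3 has to be invoked on two separate groups of terms in the disagreement decomposition and then threaded through the Lyapunov inequality without double counting. A secondary difficulty, which is precisely what distinguishes this proof from the doubly-stochastic treatment of \cite{Ned09,Ned10a}, is that the $1/\min_j \phi^j$ pre-factor arises when one normalizes the weighted sum $\sum_j \phi^j \bigl(f_j(y(h))-f_j(x^*)\bigr)$ into the single quantity $f(y(h))-f(x^*)$; Theorem~3 guarantees that $\min_j \phi^j \geq \eta_A > 0$, so this normalization is well-defined.
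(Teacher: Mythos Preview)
Your proposal is correct and follows exactly the route the paper takes: the paper's own proof is a one-line appeal to \cite{Ned10a}, stating that the argument there carries over once the uniform limit $\tfrac{1}{m}\boldsymbol{1}$ is replaced by $\phi$ (and in particular $\tfrac{1}{m}$ by $\min_j \phi^j$). Your three-step outline is precisely a detailed unpacking of that reference, with Lemma~3 and Theorem~3 inserted at the points where the doubly-stochastic argument would have used $\phi^j = 1/m$.
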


\begin{proof}
The proof follows \cite{Ned10a} with the modification of the limit $\phi$ defined as $\underset{j \in [n]}{min}\phi^j$.
\end{proof}




\begin{Theorem}
For $s \rightarrow \infty$ and $k > s$, we have, \\
$  \underset{k \rightarrow \infty}{\limsup} f(\hat{x\:}^i(k)) \leq f(x^*) + \dfrac{\alpha L^2C_2 }{2  (\underset{j \in [n]}{\text{min}} \; \underset{i \in [n]}{\text{min}} \: w^j_i(0))} $
\end{Theorem}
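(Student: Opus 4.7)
The plan is to take $\limsup_{k\to\infty}$ of the non-asymptotic performance bound supplied by Lemma 5, and then invoke Theorems 1 and 3 to replace the ensemble-limit quantity $\min_{j}\phi^j$ by the explicit initial-data quantity $\min_{j}\min_{i}w^j_i(0)$. The whole argument is essentially a limit-and-substitute: Lemma 5 already contains every nontrivial estimate, and Theorems 1 and 3 provide the bridge from $\phi$ back to the agents' initial priorities.

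First I would inspect the four error terms on the right-hand side of Lemma 5 and classify them by their $k$-dependence. Three of them are $O(1/k)$: the transient term $\frac{m^2 L\Omega}{k\beta(1-\beta^{1/B_0})}C_1$, the initial-disagreement term $\frac{(\mathrm{dist}(y(0),X^*)+\alpha mL)^2}{2k\alpha\min_{j}\phi^j}$, and the step-size residual $\frac{2\alpha mL^2}{k}$. Every constant appearing inside them, namely $C_1$, $\Omega$, $\beta$, $B_0$, $m$, $L$, $\alpha$, $\mathrm{dist}(y(0),X^*)$, and $\min_{j}\phi^j$, is finite and $k$-independent under Assumptions 1--2, the uniform bound $\|\alpha^i d^i(k)\|\le L_1$, and the nonemptiness of $X^*$. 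Hence each of the three terms tends to $0$ as $k\to\infty$, and only the $k$-free term $\frac{\alpha L^2 C_2}{2\min_{j}\phi^j}$ survives. Taking the $\limsup$ of both sides of the Lemma 5 inequality yields
\begin{equation*}
\limsup_{k\to\infty} f(\hat{x}^i(k)) \;\le\; f(x^*) \;+\; \frac{\alpha L^2 C_2}{2\,\min_{1\le j\le n}\phi^j}.
\end{equation*}

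To finish, I would invoke Theorem 3, which certifies $\min_{j}\phi^j \ge \eta_A$, and then Theorem 1, which identifies $\eta_A$ in closed form as $\min_{j}\min_{i} w^j_i(0)$. Because $\min_{j}\phi^j$ sits in a denominator, substituting the (possibly smaller) lower bound $\eta_A$ can only enlarge the right-hand side, which is precisely the direction needed for an upper bound. Combining the two substitutions gives
\begin{equation*}
\limsup_{k\to\infty} f(\hat{x}^i(k)) \;\le\; f(x^*) \;+\; \frac{\alpha L^2 C_2}{2\bigl(\min_{j\in[n]}\min_{i\in[n]} w^j_i(0)\bigr)},
\end{equation*}
which is the claim of Theorem 4.

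The main ``obstacle'' here is bookkeeping rather than any genuine difficulty: one must verify that every constant buried inside $C_1$, $C_2$, $\Omega$, and $\beta$ is truly $k$-independent, so that the three $O(1/k)$ terms really do vanish instead of hiding a growing factor, and one must keep the direction of the inequality correct when replacing $\min_{j}\phi^j$ by $\eta_A$ inside a reciprocal. The substantive work has already been done upstream, in Lemma 5 itself (which adapts the Nedi\'c--Ozdaglar analysis to the row-stochastic setting) and in Theorems 1 and 3 (which exploit the priority-update structure to pin down $\eta_A$ and $\phi$); Theorem 4 is then essentially a clean corollary of those results.
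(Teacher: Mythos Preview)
Your proposal is correct and follows essentially the same approach as the paper: the paper's own proof is a one-line appeal to Lemma~5 combined with Theorem~3 (and the identification of $\eta_A$), which is exactly the ``take $\limsup$ in Lemma~5, kill the $O(1/k)$ terms, then replace $\min_j\phi^j$ by $\eta_A=\min_{j}\min_{i}w^j_i(0)$'' argument you outline. Your write-up simply makes explicit the bookkeeping that the paper leaves implicit.
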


\begin{proof}
The theorem is obtained by combining Theorem 3, Lemmas 1 and 5.
\end{proof}

Agents' initial priorities influence the performance bound for the proposed algorithm as shown above. This is not surprising since the final weights are computed from agents' initial priorities. An extremely small initial priority would increase the limit superior of the performance bound, which indicates that very small initial weights can harm performance. This suggests that agents' priorities must be balanced with need for attaining a high-quality final result, and agents can take this into account when calibrating their preferences.

\section{NUMERICAL RESULTS}
\label{NUMERICAL RESULTS}

In this section, the proposed algorithm is run over two simulation scenarios with different size networks. The first simulation goal is to show graphically the influence of the initial priorities on the final results, i.e., the objective function values and final decision variables. This simulation is performed with the quadratic functions 

\begin{equation}
\begin{split}
    f_1(x) &= 2(x-15)^2+100 \\
    f_2(x) &= 5(x+275)^2+10,000,
    \end{split}
\end{equation}

\noindent where $f_1(x)$ is associated to agent 1 and $f_2(x)$ to agent 2. Therefore, the agents solve
\begin{equation}
 \underset{x}{\text{minimize}} \sum_{j=1}^2 w_jf_j(x).
\end{equation}

The gradient step size is $\alpha$ = 0.00002. Twenty simulations with different initial priorities (see Table \ref{RandomInitialWeigths}) were performed for the same initial states, which are $x_1 = 485$ and $x_2 = 200$. Agents exchange information $100,000$ times, i.e, the maximum iteration number $k$ is 100,000 in \eqref{w_i(k+1)} and \eqref{x^i(k+1)}. Table \ref{ResultsF2} presents the results obtained by the proposed algorithm. The first two columns present the convergence of the priorities vector, where $\overline{w}_1$ and $\overline{w}_2$ are the average of the initial priorities, which agents' priorities converge to. Also, we see that the proposed algorithm closely approaches the optimal value $x^*$, which is shown by how closely $\hat{x}$ approaches $x^*$. By comparing the twenty simulations, we notice the direct influence of priorities on the final agent results. Indeed, the agent state value, $\hat{x}$, which is defined by \eqref{x_hat} with $k=100,000$, increases as $w_1$ value increases while the $\sum_{j=1}^2 \overline{w_j}f(\hat{x})$ value decreases. Figure \ref{F2vsF1} displays this influence, where the Pareto Front of $f_2(x)$ as a function of $f_1(x)$ is presented. The first right point on Figure \ref{F2vsF1}  corresponds to the first entry of Tables \ref{RandomInitialWeigths} and \ref{ResultsF2} and so on up to the last left point. Therefore, the proposed algorithm allows the exploration of the Pareto Front by optimizing several objective functions.

\begin{table}[h]
\centering
\caption{Different initial priorities for scenario 1}
\label{RandomInitialWeigths}
\resizebox{0.25\textwidth}{!}{
\begin{tabular}{lllll}
\hline
	&	\multicolumn{2}{c}{Agent 1}			&	\multicolumn{2}{c}{Agent 2}				\\ \hline
	&	$w_1$	&	$w_2$	&	$w_1$	&	$w_2$	\\ \hline
1	&	0.134	&	0.866	&	0.022	&	0.978	\\
2	&	0.577	&	0.423	&	0.026	&	0.974	\\
3	&	0.139	&	0.861	&	0.476	&	0.524	\\
4	&	0.561	&	0.439	&	0.269	&	0.731	\\
5	&	0.560	&	0.440	&	0.301	&	0.699	\\
6	&	0.521	&	0.479	&	0.372	&	0.628	\\
7	&	0.433	&	0.567	&	0.471	&	0.529	\\
8	&	0.647	&	0.353	&	0.308	&	0.692	\\
9	&	0.287	&	0.713	&	0.801	&	0.199	\\
10	&	0.447	&	0.553	&	0.646	&	0.354	\\
11	&	0.362	&	0.638	&	0.788	&	0.212	\\
12	&	0.849	&	0.151	&	0.373	&	0.627	\\
13	&	0.749	&	0.251	&	0.504	&	0.496	\\
14	&	0.549	&	0.451	&	0.728	&	0.272	\\
15	&	0.780	&	0.220	&	0.669	&	0.331	\\
16	&	0.896	&	0.104	&	0.598	&	0.402	\\
17	&	0.716	&	0.284	&	0.839	&	0.161	\\
18	&	0.937	&	0.063	&	0.830	&	0.170	\\
19	&	0.884	&	0.116	&	0.944	&	0.056	\\
20	&	0.939	&	0.061	&	0.981	&	0.019	\\ \hline
\end{tabular}}
\end{table}

\begin{table}[h]
\centering
\caption{Results obtained by the proposed algorithm for scenario 1}
\label{ResultsF2}
\resizebox{0.45\textwidth}{!}{
\begin{tabular}{lllllll}
\hline
&$\overline{w_1}$	&	$\overline{w_2}$	&	$\boldsymbol{x^*}$	&	$\hat{x}$	&	$\sum_{j=1}^2 \overline{w_j}f_j(x^*)$	&	$\sum_{j=1}^2 \overline{w_j}f(\hat{x})$	\\ \hline
1&0.078	&	0.922	&	-265.57	&	-265.56	&	21,849	&	21,849	\\
2&0.301	&	0.699	&	-232.34	&	-232.33	&	50,241	&	50,241	\\
3&0.307	&	0.693	&	-231.32	&	-231.31	&	50,840	&	50,840	\\
4&0.415	&	0.585	&	-210.92	&	-210.91	&	60,258	&	60,258	\\
5&0.430	&	0.570	&	-207.71	&	-207.70	&	61,325	&	61,325	\\
6&0.447	&	0.553	&	-204.20	&	-204.19	&	62,375	&	62,375	\\
7&0.452	&	0.548	&	-203.07	&	-203.06	&	62,688	&	62,688	\\
8&0.477	&	0.523	&	-197.42	&	-197.41	&	64,077	&	64,077	\\
9&0.544	&	0.456	&	-181.39	&	-181.38	&	66,550	&	66,550	\\
10&0.546	&	0.454	&	-180.70	&	-180.69	&	66,612	&	66,612	\\
11&0.575	&	0.425	&	-173.09	&	-173.08	&	67,064	&	67,064	\\
12&0.611	&	0.389	&	-163.16	&	-163.15	&	67,069	&	67,069	\\
13&0.626	&	0.374	&	-158.57	&	-158.56	&	66,863	&	66,863	\\
14&0.639	&	0.361	&	-154.86	&	-154.85	&	66,606	&	66,606	\\
15&0.724	&	0.276	&	-126.37	&	-126.36	&	62,224	&	62,224	\\
16&0.747	&	0.253	&	-118.03	&	-118.03	&	60,231	&	60,231	\\
17&0.777	&	0.223	&	-106.02	&	-106.01	&	56,865	&	56,865	\\
18&0.883	&	0.117	&	-56.99	&	-56.96	&	38,136	&	38,136	\\
19&0.914	&	0.086	&	-40.30	&	-40.25	&	30,263	&	30,263	\\
20&0.960	&	0.040	&	-12.26	&	-12.18	&	15,674	&	15,674	\\ \hline
\end{tabular}}
\end{table}

\begin{figure}[h]
   \includegraphics[width=0.5\textwidth,height=0.35\textwidth]{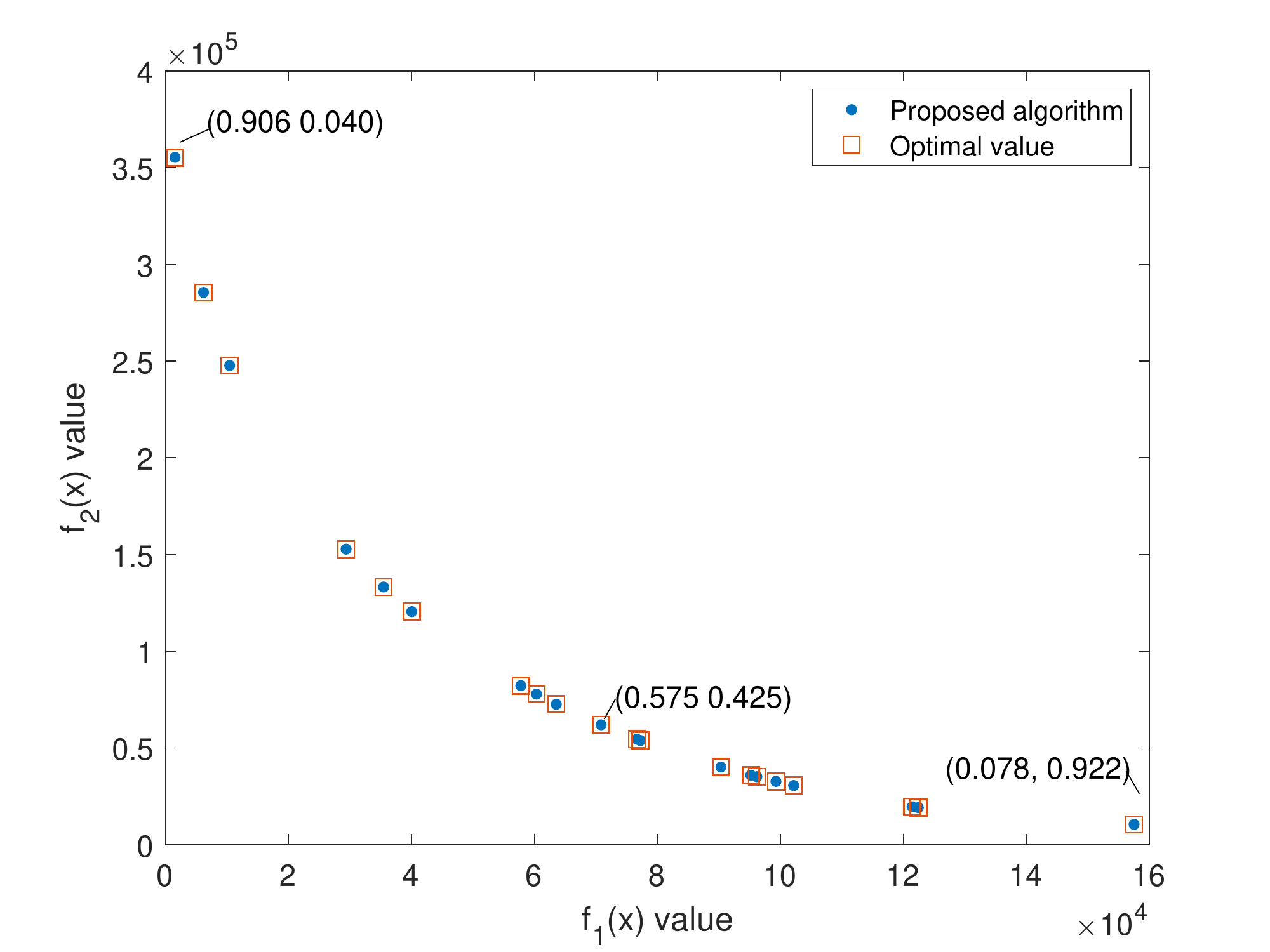}
   \caption{$f_2(x)$ in function of $f_1(x)$.}
    \label{F2vsF1}
  \end{figure}

A second simulation scenario was run consisting 20 agents that minimize the prioritized sum of the following objective functions:
 
 \begin{align*}
f_1(x) &= x_1^2+x_1x_2+x_2^2 &   f_{11}(x)& = 2x_1^2               \\
f_2(x) &= 5(x_1^2+x_1x_2+x_2^2)         &  f_{12}(x) &= x_1^2   \\
f_3(x) &= 10x_1+15x_2+20x_3   &  f_{13}(x) &= 5x_1+150   \\
f_4(x) &= \sum_{i=1}^{10} x_i^2 & f_{14}(x) &= \sum_{i=1}^6 x_i \\
f_5(x) &= e^{x_1} & f_{15}(x) &= 10(x_1+25)^2 \\
f_6(x) &= 3(x_1+17)^2+150 & f_{16}(x) &= e^{2x_1}+e^{3x_2}+e^{3x_3}+e^{3x_4} \\
 f_7(x) & = 30(x_1+3)^2+30 & f_{17}(x) &= \sum_{i=1}^6 x_i^2 \\
 f_8(x) &= 7(x_1-10)^2+10 & f_{18}(x) &= 15(x_1-15)^2-100 \\
 f_9(x)& = x_1^2+x_2^2 & f_{19}(x) &= 2(x_1+x_1x_2+x_2^2) \\
  f_{10}(x) &= x_1+x_2+x_3 & f_{20}(x) &= 100e^{x_1}.
\end{align*}

Each agent has a function assigned to it and starts with different priorities and state vectors. Figure \ref{20agents} shows the  network topology. 

  \begin{figure}[h!]
   \includegraphics[width=0.42\textwidth,height=0.35\textwidth]{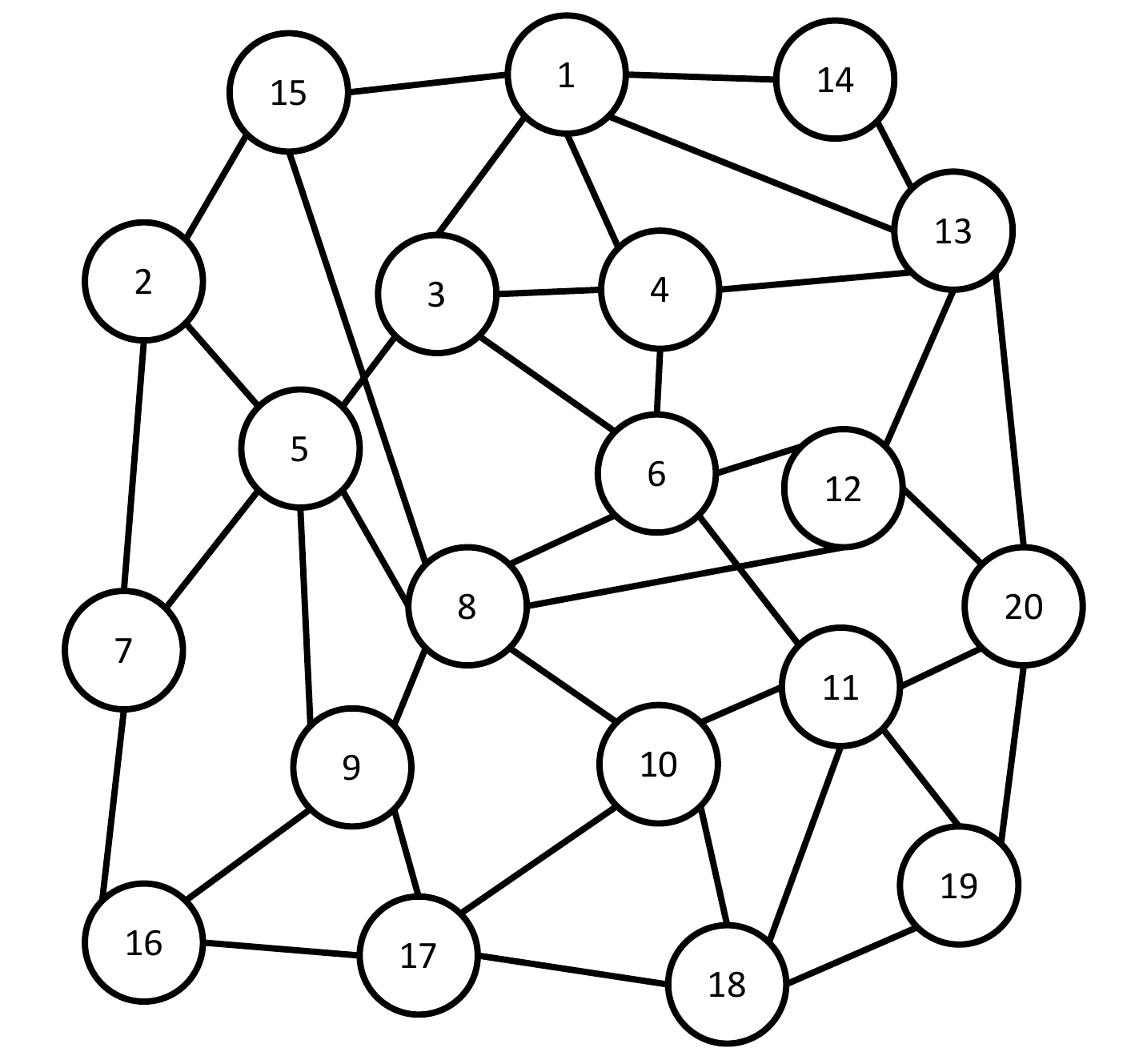}
   \caption{Communication topology of the network.}
    \label{20agents}
  \end{figure}

Figure \ref{SumFxstar_Fx_est_20functions_August6_it17_10} presents on a log-scale the difference between the optimal solution and the estimated solution as a function of the iteration $k$.  We can see that the agent teams indeed quickly reach an approximately optimal solution. 
  
  \begin{figure}[h!]
   \includegraphics[width=0.5\textwidth,height=0.35\textwidth]{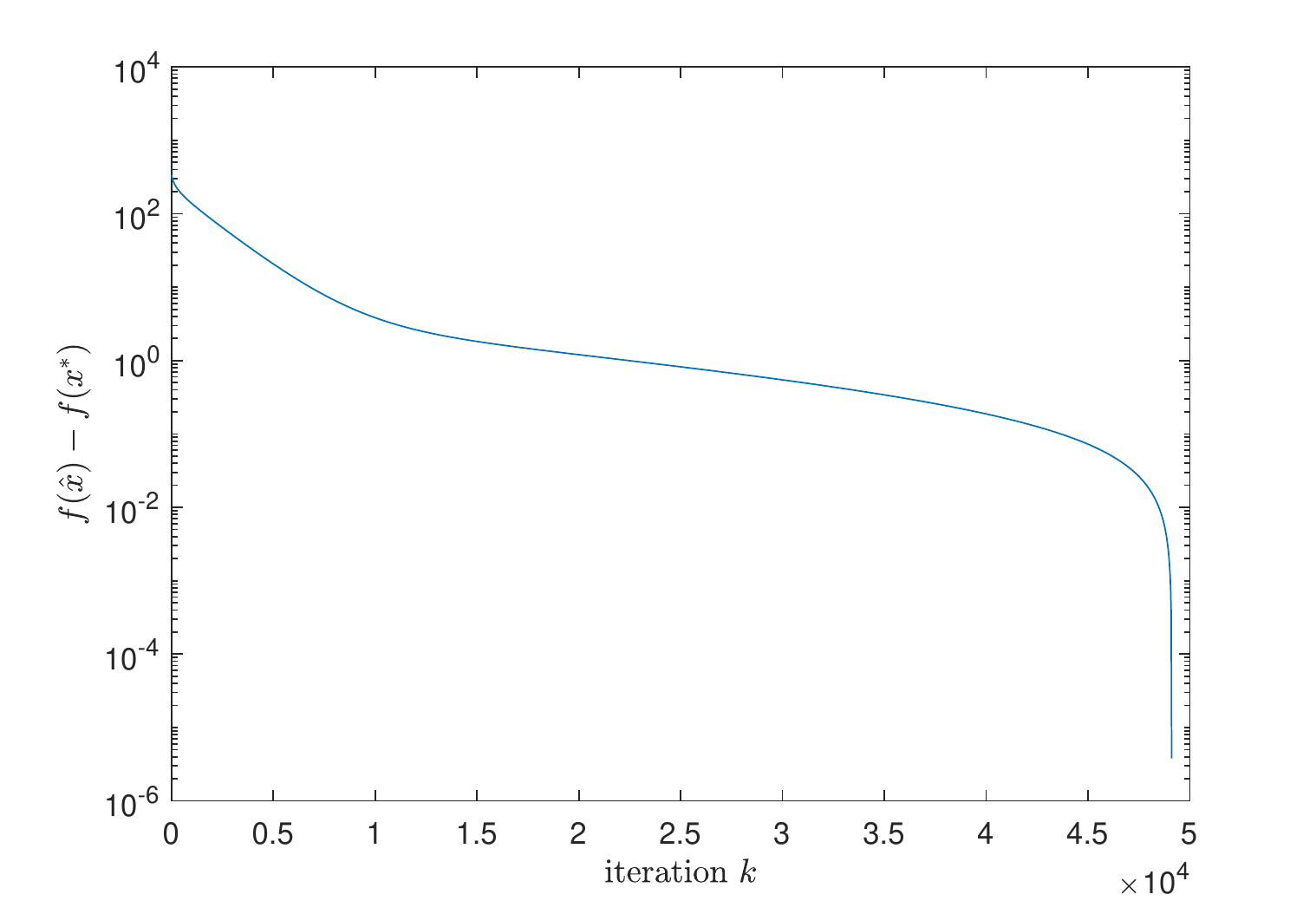}
   \caption{$f(\hat{x})- f(x^*)$  in function of the iteration number $k$ for a network of 20 agents }
    \label{SumFxstar_Fx_est_20functions_August6_it17_10}
  \end{figure}

\section{CONCLUSIONS}
\label{CONCLUSIONS}

This paper proposed a distributed algorithm to optimize a prioritized sum of convex objective functions. The algorithm allows the exploration of the Pareto Front of the overall objectives, which has a direct influence on the agents' final results. We established a rule to determine the weights in agents' state updates. The weight matrix in this paper is row stochastic, which provides a less restrictive communication characterization compared to several related works where the matrix is doubly-stochastic. We studied the convergence of the algorithm and provided explicit bounds for the transition matrix and the agent state update matrix, as well as a performance bound for the algorithm. Future works include implementing the proposed algorithm on a team of robots and investigating real-time changes of the relative importance of the objective functions.







\addtolength{\textheight}{-12cm}   

\end{document}